\newcommand{\eat}[1]{}
\newtheorem{theorem}{Theorem}
\newtheorem{lemma}[theorem]{Lemma}
\def\eqref#1{equation~\ref{#1}}
\def\1{\bm{1}}
\DeclareMathAlphabet{\mathsfit}{\encodingdefault}{\sfdefault}{m}{sl}
\SetMathAlphabet{\mathsfit}{bold}{\encodingdefault}{\sfdefault}{bx}{n}
\title{Differentially Private Learned Indexes}
\author{
    Jianzhang Du\\
    Indiana University\\
    \texttt{du5@iu.edu}
  \and
    {\bf Tilak Mudgal \& Rutvi Rahul Gadre}\\
    UMass Dartmouth\\
    \texttt{\{tmudgal,rgadre\}@umassd.edu}
    \and
    {\bf Yukui Luo}\\
    Department\\
    \texttt{yluo11@binghamton.edu}
    \and
    {\bf Chenghong Wang}\\
    Indiana University\\
    \texttt{cw166@iu.edu}
}
\author{Jianzhang Du\thanks{Co-first author, equal contribution.} \\
Indiana University\\
\texttt{du5@iu.edu}\\
\And
{\bf Tilak Mudgal$^{*}$ \& Rutvi Rahul Gadre$^{*}$}\\
UMass Dartmouth\\
\texttt{\{tmudgal,rgadre\}@umassd.edu}\\
\And
{\bf Yukui Luo}\\
Binghamton University\\
\texttt{yluo11@binghamton.edu}\\
\And
{\bf Chenghong Wang}\\
Indiana University\\
    \texttt{cw166@iu.edu}
}
\begin{document}

\maketitle
\begin{abstract}
In this paper, we address the problem of efficiently answering predicate queries on encrypted databases—those secured by Trusted Execution Environments (TEEs), which enable untrusted providers to process encrypted user data without revealing its contents. A common strategy in modern databases to accelerate predicate queries is the use of indexes, which map attribute values (keys) to their corresponding positions in a sorted data array. This allows for fast lookup and retrieval of data subsets that satisfy specific predicates. Unfortunately, indexes cannot be directly applied to encrypted databases due to strong data-dependent leakages. Recent approaches apply differential privacy (DP) to construct noisy indexes that enable faster access to encrypted data while maintaining provable privacy guarantees. However, these methods often suffer from large storage costs, with index sizes typically scaling linearly with the key space. To address this challenge, we propose leveraging learned indexes—a trending technique that repurposes machine learning models as indexing structures—to build more compact DP indexes. Our contributions are threefold: (i) We first propose a strawman method that directly applies DP training techniques to classical learned indexes. We then show that this general paradigm can result in significant utility concerns, with an error no better than existing DP indexes. (ii) We then introduce a completely different paradigm—learning a private index structure on already distorted noisy key-position mappings. Specifically, we propose a range tree-based private mechanism to generate distorted key-position mappings, followed by error-bounded piece-wise linear regression (PLR) models to learn a compact representation of the noisy mapping. (iii) We provide a formal analysis of our DP-PLR based learned indexes. Results show that our DP-PLR can achieve near-lossless processing with bounded overhead. Additionally, DP-PLR significantly reduces index size, from the linear key-size scaling of traditional methods to potentially constant size.\vspace{-2mm}
\end{abstract}
\section{Introduction}\vspace{-2mm}
Over the past decade, there has been a significant increase in the use of cloud computing for data storage and analysis. Its low cost, high availability, scalability, and ease of use make it an appealing option for businesses and scientific research. However, organizations that handle sensitive data, such as hospitals, banks, government agencies, and energy companies, may hesitate to use cloud services due to privacy concerns. The shared nature of cloud resources, coupled with potential vulnerabilities in the privileged software stack, has already led to various privacy breaches~\citep{ibm2024data}. As a result, there is a critical need for robust measures to safeguard data-in-use privacy in the cloud environment. This is essential not only for policy compliance~\citep{assistance2003summary, voigt2017eu}, but also for maintaining public trust and advancing national priorities~\citep{NationalStrategy2022}

This need has given rise to a long line of research in an area known as {\em Encrypted Databases} (EDBs)\citep{eskandarian2017oblidb, wang2021dp, qiu2023doquet}, which enable untrusted cloud providers to manage and process encrypted user data. To achieve this, EDBs leverage Trusted Execution Environments (TEEs)\citep{costan2016intel} to establish secure hardware enclaves on cloud machines, ensuring that any execution within these enclaves remains strongly isolated from the rest of the software stack, including the privileged OS and hypervisors. Users' data is only decrypted and processed inside these enclaves, and remains encrypted and integrity-protected whenever it leaves the enclave. Despite the strong encryption and isolation provided by TEEs, researchers have identified various side-channel threats associated with TEEs, leading to significant real-world data breaches~\citep{kocher2020spectre}. For example, different query processing can result in distinguishable memory access patterns and read/write volumes, which attackers can exploit to reconstruct substantial portions of the data~\citep{kellaris2016generic}, even if it is placed inside a TEE or encrypted elsewhere. As a result, modern EDBs combine TEEs with oblivious algorithms, which implement branchless processing methods and pad the complexity to the worst-case maximum to ensure complete data independence.

\eat{
\begin{figure}[H]
    \centering
    \begin{minipage}{0.3\textwidth}
        \centering
        \includegraphics[width=\linewidth]{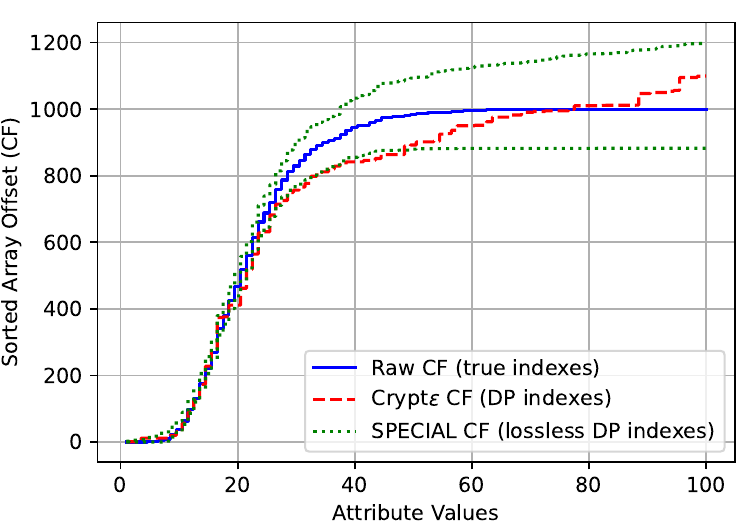} 
        \caption{EDB workflow}
        \label{fig:fig1}
    \end{minipage}
    \begin{minipage}{0.3\textwidth}
        \centering
        \includegraphics[width=\linewidth]{figure/index.pdf} 
        \caption{Indexes}
        \label{fig:fig2}
    \end{minipage}
    \begin{minipage}{0.3\textwidth}
        \centering
        \includegraphics[width=\linewidth]{figure/index.pdf} 
        \caption{DP indexes}
        \label{fig:fig3}
    \end{minipage}
\end{figure}
\vspace{-1em}
}
While oblivious algorithms provide strong and provable privacy guarantees for today's EDBs, they clash with modern database optimization techniques, which often rely on leveraging data-dependent patterns for fine-grained performance improvements. A prime example is the use of indexes, which map attribute values (or keys) to their positions in a sorted array. Indexes enable rapid access to specific data subsets, reducing the need for frequent full table scans and minimizing excessive I/Os. Unfortunately, this promising technique is not directly compatible with EDB's privacy guarantees as they can leak exact information about the data distribution. As a result, EDBs must sequentially load all encrypted data into the TEE for every query processing, even when only a small portion is needed. To bridge this gap,~\citet{sahin2018differentially} and~\citet{chowdhury2019crypt} independently introduced DP indexes, which distort key-position mappings with DP noise to privately index encrypted data. For instance,~\citet{sahin2018differentially} use a $B^{+}$ tree-based index, where (pointers to) private data are stored at leaf nodes, each managing data with the same attribute value. They then apply DP by randomly adding or removing tuples at leaf nodes to create plausible deniability. \citet{chowdhury2019crypt} propose a more storage-efficient design without dummy data, which directly distorts index endpoints. For example, if the true index range for an attribute value is \( D[v_0, v_1) \), the DP index generates \( D[\tilde{v}_0, \tilde{v}_1) \), where each endpoint $\tilde{v}_0$ and $\tilde{v}_1$ is distorted by independent DP noise. Although \( D[\tilde{v}_0, \tilde{v}_1) \) may include irrelevant data, a TEE can filter it inside the enclave. However, the symmetric DP noise can still cause data loss when \( D[v_0, v_1) \setminus D[\tilde{v}_0, \tilde{v}_1) \neq \emptyset \). To address this,~\citet{wang2024special} use one-sided DP noise to ensure lossless indexing by enforcing \( \tilde{v}_1 > v_1 \) and \( \tilde{v}_0 < v_0 \). Nevertheless, all these DP indexes share a common limitation: storage costs typically \( O(N) \), where \( N \) is the number of keys. For instance, a $B^{+}$ tree requires at least $2N-1$ nodes, and methods like~\citep{roy2020crypt, wang2024special} will store at least one noisy endpoint per attribute value.

\citet{kraska2018case} argue that indexes are inherently models, sparking a growing research area known as learned indexes~\citep{wu2024automatic}, which repurpose machine learning (ML) models as database indexes. Recent advancements in learned indexes have demonstrated their powerful storage efficiency, showing that even with a limited number of models, accurate key-position mappings can be approximated. When models are simple, such as linear models, the total storage for learned indexes can be reduced to constant sizes. This leads to the fundamental question of this work:
\begin{center}
    {\em Can we leverage learned index techniques to build new DP indexes for EDBs that are both provably private and compact in storage size?}
\end{center}

To address this question, we initiate the first study on designing DP learned indexes. Our major contributions are as follows: (i) We begin by proposing a strawman method that directly applies existing DP training techniques to the training of classical learned indexes. Nevertheless, we show that this general paradigm can lead to significant utility concerns due to the large sensitivity in gradients. We provide an empirical risk lower bound for this method, demonstrating that it performs no better than existing DP indexes. (ii) To achieve practical DP learned indexes with strong utility, we advocate for a completely different paradigm—applying standard training on distorted key-position mappings. To accomplish this, we propose a range tree-based mechanism that generates noisy key-position mappings, with privacy-induced error bounded by \smash{$O(\log^{3/2}{N})$}. We then apply error-bounded piece-wise linear regression (PLR) to learn a compact representation of these noisy mappings. (iii) We provide a formal analysis of our proposed DP-PLR method and compare its guarantees with existing DP indexes. Results show that the PLR yields index sizes in $O(1)$ and can, with high probability, achieve lossless indexing, while keeping the index lookup overhead (the number of irrelevant tuples indexed) bounded by \smash{$O(\log^{3/2}{N})$}.

\section{Background and Related Work}
\vspace{-1mm}\noindent{\bf General notations for relational databases.} We define {\em database instances} \( D \) as relational tables with attributes \( \texttt{attr}(D) \), where each attribute \( A \in \texttt{attr}(D) \) has a discrete domain of \( \texttt{dom}(A)=\{x_i\}_{i=1}^{N} \). By default, we consider $\texttt{dom}(A)$ is sorted by ascending order. 
We define the {\em frequency (count)} of an attribute value $x$ of $A$ in $D$ as $\texttt{F}(x, D, A) = \sum_{t\in D \wedge t.A = x} \mathbbm{1}$. The frequencies for all attribute values of $A$, is defined as the histogram of $A$ in $D$, denoted as $\texttt{H}(D, A_i) = \{c_k=\texttt{F}(x_k, D, A_i)\}_{\forall x_k \in \texttt{dom}(A_i)}$. For simplicity, we may omit the arguments $D$ and $A$ of $\texttt{F}(\cdot), \texttt{H}(\cdot)$ if they're already defined. In this work, we focus on linear predicate queries, denoted as \( q_{\phi}(D) \), which retrieve tuples from \( D \) satisfying a predicate \(\phi\) then compute aggregated statistics on the fetched data. A predicate \(\phi\) is a logical expression with conditions on attributes, formed using conjunctions ($\land$) or disjunctions ($\lor$). Each condition is a logical comparison such as \( A_i = a \) or \( A_j > b \). 

\noindent{\bf EDB system model.} We consider a standard EDB model~\citep{zheng2017opaque, eskandarian2017oblidb} in a cloud environment with two entities (as shown in Figure~\ref{fig:fig1}): the service provider (SP), managing the cloud infrastructure including the TEE, and the data owner (DO), who securely outsources storage and processing of private data to the SP. In the standard EDB model, the SP is considered honest-but-curious~\citep{paverd2014modelling}, meaning they follow the pre-defined EDB protocol without deviation but may attempt to learn sensitive information about the owner's data by observing execution transcripts.

To initiate the EDB, the TEE first creates an enclave on SP's cloud machine, generates encryption keys $(\mathsf{sk}, \mathsf{pk})$, keeps $\mathsf{sk}$ inside the enclave, and sends $\mathsf{pk}$ to the owner. The DO encrypts their data tuple-wisely using $\mathsf{pk}$ along with a result key $\mathsf{K}{\mathsf{r}}$, then uploads the encrypted data and $\mathsf{K}{\mathsf{r}}$ to the SP. When the DO issues a query, the TEE loads, decrypts, and processes the data, reencrypts the result using decrypted $\mathsf{K}_{\mathsf{r}}$, and sends it back to the DO for decryption. Moreover, we assume the EDB employs oblivious algorithms for data processing to prevent side-channel leakages. For predicate queries, an example of the oblivious processing algorithm is as follows: (i) Upon receiving the query, the TEE sequentially reads the entire dataset into the enclave and performs a linear scan, labeling each tuple as a "match" or "non-match" based on the predicate. To maintain obliviousness, the label is updated for every tuple, regardless of whether it matches the predicate; (ii) The TEE then makes a second pass over the labeled data to compute the desired statistics as specified by the query. Similarly, during the statistics computation, every tuple is accessed, including non-matching ones, for which a dummy read is applied to maintain obliviousness.

\vspace{-1em}
\begin{figure}[H]
    \centering
    \begin{minipage}{0.35\textwidth}
        \centering
        \includegraphics[width=\linewidth]{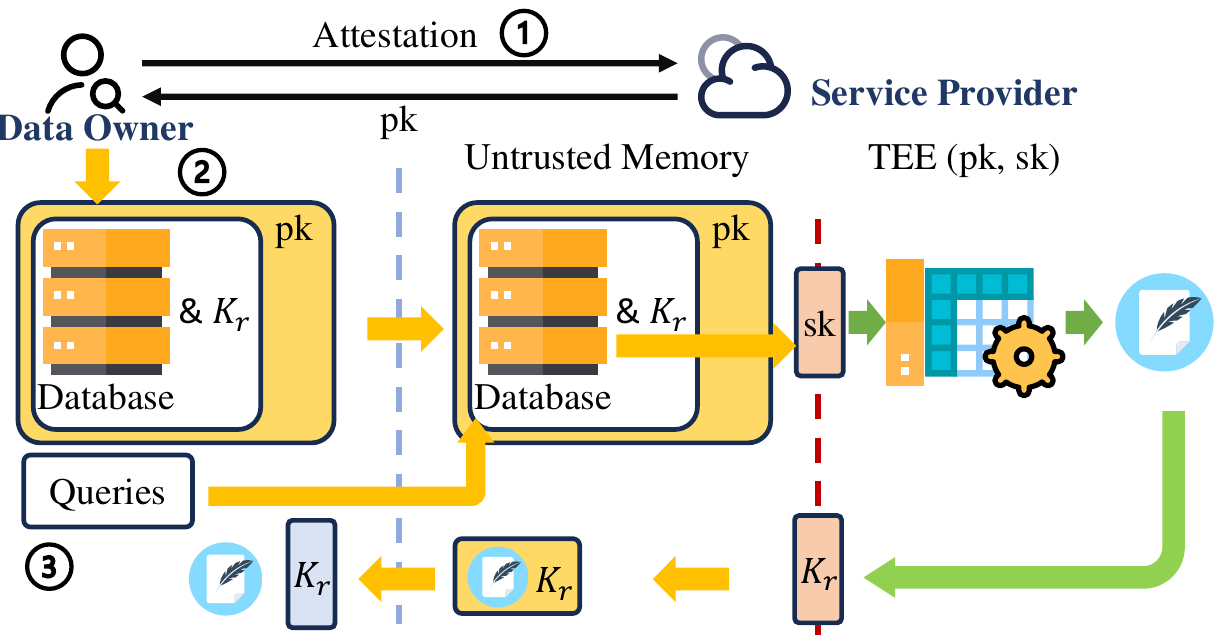} 
        \caption{EDB workflow}
        \label{fig:fig1}
    \end{minipage}%
    \begin{minipage}{0.3\textwidth}
        \centering
        \includegraphics[width=\linewidth]{figure/index.pdf} 
        \caption{CF model indexes}
        \label{fig:fig3}
    \end{minipage}
    \begin{minipage}{0.34\textwidth}
        \centering
        \includegraphics[width=\linewidth]{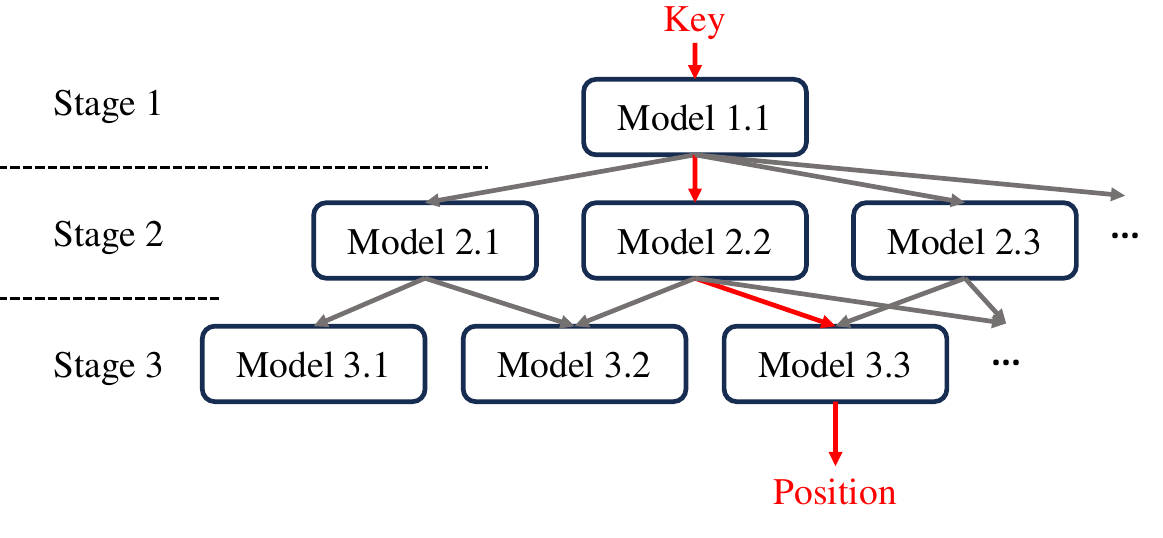} 
        \caption{RMI example}
        \label{fig:fig3}
    \end{minipage}
\end{figure}
\vspace{-1em}
\noindent{\bf Index model and learned indexes.} Given $D$ sorted by $A$, and an attributed value $x_i \in \texttt{dom}(A) = \{x_1, x_2, \dots, x_n\}$, a index structure maps $x_i$ to an interval $[v_0, v_1)$ such that $D[v_0, v_1)$ contains all tuples $t\in D$ that $t.A = x_i$.~\citet{kraska2018case} suggest that indexes can essentially be abstracted as a cumulative frequency (CF) model. For instance, the lower and upper bound of $[v_0, v_1)$ mentioned above can actually computed by the CFs of $v_0=\texttt{CF}(x_{i-1})=\sum^{i-1}_{j=0} \texttt{F}(x_j)$, and $v_1=\texttt{CF}(x_i)$\footnote{With out loss of generality, we implicitly assume $\texttt{F}(x_0)=0$.}. To better show this, we provide a visual example in Figure~\ref{fig:fig3}. \citet{kraska2018case} argue that traditional indexes (e.g., B\(^+\) trees) are inherently data structures that compute or approximate a CF curve (CFC), which is an ordered collection \(\{x_i, y_i\}_{i=1}^{N}\), where \(y_i = \texttt{CF}(x_i)\). They propose that these classical data structures can be replaced by ML models, namely learned indexes, which learn to approximate the CFC. A well-known example of a learned index is the Recursive Model Index (RMI)~\citet{kraska2018case}. The RMI mimics the structure of a $B^{+}$ tree but replaces its internal nodes with models. For a lookup operation, the root model predicts which child model should handle the request, and this process continues through successive stages until a leaf model is reached. The final (leaf) model then estimates the key’s position by fitting it to the CFC (Figure~\ref{fig:fig3}).



\noindent{\bf Differential privacy.} 
DP is a well-established privacy framework that is rooted from the property of algorithm stability. Specifically, a randomnized mechanism $\mathcal{M}$ is said to satisfies $(\epsilon, \delta)$-DP if for any pair of neighboring databases $D$ and $D'$, differing by at most one tuple, and for all $\forall O \subset \mathcal{O}$, where $\mathcal{O}$ denotes all possible outputs, the following holds
$$\textup{Pr}[\mathcal{M}(D)\in O] \leq e^{\epsilon}\textup{Pr}[\mathcal{M}(D')\in O] + \delta$$
 This definition ensures that the probability of producing a specific output does not change significantly (up to a multiplicative factor of $e^{\epsilon}$) when any single tuple in the dataset is modified. The slack $\delta$ introduces a practical relaxation, allowing the privacy guarantee to fail with probability at most $\delta$. Note that when $\delta=0$, then the above definition is commonly referred as $(\epsilon)$-DP or pure DP.
 

\noindent{\bf DP indexes.}  Two independent works~\citet{sahin2018differentially, chowdhury2019crypt} introduced the concept of DP indexes to enable EDBs to use indexing techniques. Specifically, \citet{sahin2018differentially} use a variant $B^{+}$ to implement DP indexes, where a binary tree routes keys to a leaf that stores (pointers to) all records with a specific attribute value. To ensure DP, they adjust the leaf data by adding (positive noise) or removing (negative noise) tuples based on randomly sampled DP noise. To prevent important records from being removed due to negative noise, a pessimistic amount of dummy tuples is added to each leaf data before the DP adjustment. To answer predicate queries, the entire tree is searched, and all leaf data, including dummy ones, within the predicate range is returned. While this method ensures basic DP guarantees, it results in significant storage costs due to the large amount of dummy data involved. 

\citet{chowdhury2019crypt} proposed the Crypt$\epsilon$ that do not require dummy data. Specifically, they consider to directly release a noisy CFC $\{x_i, \tilde{y}_i\}_{i=1}^{N}$ such that each $\tilde{y}_i\gets \texttt{CF}(x_i)+\texttt{Lap}(\frac{N}{\epsilon})$ is distorted by independent Laplace noise. For processing queries, the EDB first obliviously sorts the entire data by the attribute, then uses the noisy CF to index tuples. We show an example of Crypt$\epsilon$ in Figure~\ref{fig:fig3}. While Crypt$\epsilon$ indexes avoid adding dummy data, they suffer from significant data loss. The noisy index range, distorted by DP, may be smaller than the true range, potentially excluding a large volume of matching tuples.

\cite{wang2024special} improves upon Crypt$\epsilon$ by introducing SPECIAL indexes. Specifically, they create two noisy CFCs: one that overestimates the true CFC with one-sided positive Laplace noise noise and another that underestimates it with negative noise. To answer index lookup, the lower end point is taken from the underestimated curve, and the upper end point from the overestimated curve, ensuring all matching data is included. However, this approach can result in significant overhead, as the noisy index range can be much larger than the true range, leading to substantial I/O costs. We shown an example of this in Figure~\ref{fig:fig3}. To our knowledge, SPECIAL index is the only project that guarantees deterministic accuracy (no missing data).


\section{DP Learned Indexes}
In this section, we present the technical details of our proposed DP learned indexes. Before delving into the specifics, we first formulate the concrete problem to be addressed.

\noindent{\bf Problem formulation.} We consider the problem of private training and inference of learned indexes on static data. Formally, given $D$ sorted by $A \in \texttt{attr}(D)$, and $\{(x_i, y_i)\}_{i=1}^N$ to be the CFC of $A$ over $D$. We consider an idealized index to be $\textsc{Idx}(x_i) = [y_{i-1}, y_{i})$ for all $x_i$, and our goal is to build a private learned index model, $\textsc{PIdx}$, such that $\textsc{PIdx}(x_i) = [\hat{y}(x_{i-1}), \hat{y}(x_{i}))$, where $\hat{y}(x_{i})$ denotes the predicted position of $x_{i}$. In addition, we consider $\textsc{PIdx}$ should satisfies the following properties:
\begin{itemize}
    \item {\bf P-1. $(\epsilon, \delta)$-DP at the tuple level.} Given $\epsilon > 0$, and $0\leq \delta<1$. For any neighboring databases \(D\) and \(D'\), differing by a single tuple and both sorted by the same attribute \(A_i\), where \(\texttt{dom}(A_i)\in D = \texttt{dom}(A_i)\in D'\), it holds for all outputs \(O \subset \mathcal{O}\) that 
\[
\textup{Pr}[\textsc{PIdx}^{D} \in O] \leq e^{\epsilon} \cdot \textup{Pr}[\textsc{PIdx}^{D'} \in O] + \delta.
\]
This notion describes privacy at the tuple level, for instance the information related to the presence of each individual tuple by observing the index outcome, is bounded by DP. We do not consider key privacy, for example, an attacker might know whether a specific key \(x_j\) is included in \(\texttt{dom}(A_i)\in D\). This aligns with the privacy guarantees of existing DP index approaches~\citep{roy2020crypt,wang2024special} and the logic of traditional indexes, where a lookup should abort if a key is not present. Note that tuple-level privacy can also be extended to user-level privacy. If a user owns multiple tuples in the dataset and we wish to preserve privacy at the user level, we can achieve this by applying the group privacy~\citep{dwork2016concentrated} mechanism with appropriately adjusted privacy parameters. 
\item {\bf P-2. Practical utility.} We require $\textsc{PIdx}$ should offer practical utility guarantees, including high accuracy (e.g., for any $x_i$, $|\textsc{Idx}(x_i) \setminus \textsc{PIdx}(x_i)|$ is small), and low query overhead ($\forall x_i, |\textsc{PIdx}(x_i) \setminus \textsc{Idx}(x_i)|$ is small).

\item {\bf P-3. Storage efficient.} One key motivation for exploring learned indexes is the potential to use lightweight ML models instead of large data structures. Therefore, we believe DP indexes should retain this advantage, offering more storage efficiency than traditional methods, such as noisy CF-based classical DP indexes.
\end{itemize}

\subsection{Na\"ive designs}\label{sec:challenges}
 A straightforward approach is to directly apply private training methods, such as DP-SGD, to RMIs. However, we notice that the hierarchical training process considered by RMIs can cause significant utility loss when paired with DP-SGD. Specifically, consider an RMI with \( \ell \) stages, comprising one root model and \( M_{\ell} \) sub-models at each stage \( \ell \). We use \smash{\( f_0(x; \theta_0) \)} to denote the root model, where \( \theta_0 \) are its parameters. A sub-model at stage \( \ell \), indexed by \( k \), is denoted by \smash{\( f_{\ell}^{(k)}(x; \theta_{\ell}^{(k)}) \)}, with corresponding parameters \smash{\( \theta_{\ell}^{(k)} \)}. The training of the RMI proceeds sequentially, stage by stage. A model in stage \( \ell \) cannot begin training until all models in stage \( \ell-1 \) have been trained. Each model at stage \( \ell \) is trained on a subset of data that is routed by the models from the previous stage \( \ell-1 \). For example, the sub-model \smash{\( f_{\ell}^{(1)} \)} at stage \( \ell \) will only receive the training data where the model in stage \( \ell-1 \) predicts a value that routes the data to sub-model \( k=1 \). Formally, the training data for sub-model \smash{\( f_{\ell}^{(k)} \)} is given by:
\[
CF_{\ell}^{(k)} = \left\{ (x_i, y_i) \in CF \,\middle|\, \left\lfloor \frac{M_{\ell} f_{\ell-1}(x_i; \theta_{\ell-1})}{N} \right\rfloor = k \right\}.
\]
In this context, \smash{\( f_{\ell-1}(x_i; \theta_{\ell-1}) \)} refers to the prediction from the previous stage and determines which sub-model \smash{\( f_{\ell}^{(k)} \)} will be responsible for training on the given data point \( (x_i, y_i) \). To incorporate DP-SGD into the aforementioned training process, independent DP noise must be added at each stage, however, this can potentially lead to significant error accumulation in the final-stage models. Moreover, tracking and managing the privacy budget across different stages becomes challenging. 

To address the challenge that traditional RMI structures can be hard to compatible with DP-SGD, we draw inspiration from the Mixture of Experts framework to introduce FMIs. Unlike the hierarchical structure of classical RMI, which necessitates independent training of each sub-model, the FMI consists of multiple models that can be trained at the same time, which significantly simplifies integration with DP-SGD---as one can derive a unified gradient during training to which DP noise can be applied. Specifically, we consider a FMI consists of a root model \( f_0(x; \theta_0) \) and \( M \) expert-models \( \{f_k(x; \theta_k)\}_{k=1}^M \) operating at the same level. Given a key $x_i \in  CF = \{(x_i, y_i)\}_{i=1}^N $, the index prediction of FMI is then given by
\vspace{-1em}
\[
\textsc{PIdx}(x_i) = \left(\hat{y}(x_{i-1}), \hat{y}(x_{i})\right)~~s.t.~~\hat{y}(x) = \sum_{k=1}^{M} w_k(x; \theta_0) \cdot f_k(x; \theta_k),
\]
where \( \mathbf{w}(x; \theta_0) = [w_1(x; \theta_0), \dots, w_M(x; \theta_0)] \) denotes the weight vector generated by the root model. We say that FMI can be logically similar to RMI: FMI can be logically formed as a two staged RMI, but instead of picking one model for prediction, multiple last stage models collaboratively predict the output for every key, with a root model $f_0(x;\theta_0)$ adjusting the contributions of these predictions through a weight vector $\mathbf{w}(x_i; \theta_0)$. Notably, if we generalize the root model to output a one-hot-encoded weight vector, the FMI effectively reduces to a two-stage RMI model.

To enable simultaneous training of all components, the model parameters are concatenated into a unified parameter vector $\Theta = [\theta_0; \theta_1; \dots; \theta_M]$. 
During each training iteration, we sample a mini-batch \( B \subset D \), where the per-example unified gradient is computed as
\[
\begin{aligned}
    g_i &= \nabla_{\Theta} \ell_i(\Theta) = \begin{bmatrix}
        \nabla_{\theta_0} \ell_i(\Theta), \nabla_{\theta_1} \ell_i(\Theta), \dots
        \nabla_{\theta_M} \ell_i(\Theta)
    \end{bmatrix}^{\top},
\end{aligned}
\]

In this way, a unified per-example gradient vector \( g_i \) can be obtained. Next, we apply DP-SGD to update the overall model parameters, and using the obtained unified gradients, shown as follows:
\[
\begin{aligned}
    \Theta &\leftarrow  \Theta - \eta \left( \frac{1}{|B|} \left( \sum_{i \in B} \frac{g_i}{\max\left(1, \frac{\| g_i \|_2}{C} \right)} + \mathcal{N}\left(0, \sigma^2 {|B|^2}C^2 I \right) \right) \right),
\end{aligned}
\]
The above process can be summarized as follows: At each iteration, we sample a mini-batch $B\subset\{(x_i, y_i)\}_{i=1}^N$ for training. For each example in the mini-batch, the per-example gradient \( g_i\) is first clipped to a maximum \( \ell_2 \)-norm of \( C \) to bound the sensitivity. After that, the clipped gradients are averaged and perturbed with Gaussian noise \( \mathcal{N}(0, \sigma^2|B|^2 C^2 I) \), where \( I \) is the identity matrix corresponding to the dimensionality of \( \Theta \), and \( \sigma >0 \) is the noise multiplier. Finally, the DP model update is applied to the overall model parameters with a learning rate \( \eta \). 

Although FMI enables seamless integration with DP-SGD, a key limitation lies in the high sensitivity of gradients. Specifically, altering a single tuple in the dataset \(D\) can affect all entries in \(\{x_i, y_i\}_{i=1}^{N}\). This, in turn, may cause every per-example gradient in a mini-batch \(B\) to change, resulting in a sensitivity of \(|B|\). Consequently, Gaussian noise must be scaled by the batch size, which significantly reduces the utility of DP-SGD and can lead to poorly trained models. For example, applying the excess risk lower bounds of DP-SGD~\citep{bassily2014private}, and accounting for the Gaussian noise multiplier \(\sigma\) being scaled by \(|B|\), the excess risk is lower bounded by \smash{$\Omega\left(\frac{p}{N (\epsilon / |B|)^2}\right)$} even for strongly convex functions, where \(p\) is the dimensionality of the model. Furthermore, since the batch size \(|B|\) is typically proportional to the sample size, i.e., \(|B| = O(N)\), the error can be lower bounded by \smash{$\Omega\left(\frac{Np}{\epsilon^2}\right)$}. In contrast, classical DP indexes, such as Crypt\(\epsilon\), bound the privacy-induced error with high probability at \smash{$O\left(\frac{\sqrt{N}}{\epsilon}\right)$} 
(see~\citet{roy2020crypt} or our formal analysis in Table~\ref{tab:formal}). As such, the utility of DP-SGD learned indexes would have a lower bound that is asymptotically worse than that of traditional DP indexes.

\subsection{Proposed method}\label{sec:proposed}
To prevent large utility loss, we advocate a completely different paradigm for building DP learned indexes: instead of training on the original CFCs, we train on CFCs that have already been distorted by DP mechanisms. Since the training is performed as post-processing on the DP-distorted CFCs, no additional noise is required. However, two key challenges remain. First, the high sensitivity of CFCs can result in significant noise being introduced when releasing the noisy CFCs, leading to substantial utility loss in the final learned indexes, even if the training is noise-free. Second, model selection can be crucial, especially when one of our primary goals is achieving storage efficiency (P-3). Hence, we must carefully choose models that can learn a compact representation of the noisy CFC while maintain high utility. Below, we discuss our approaches to address these challenges.\vspace{-.5em}
\begin{figure}[!h]
    \centering    \includegraphics[width=0.85\linewidth]{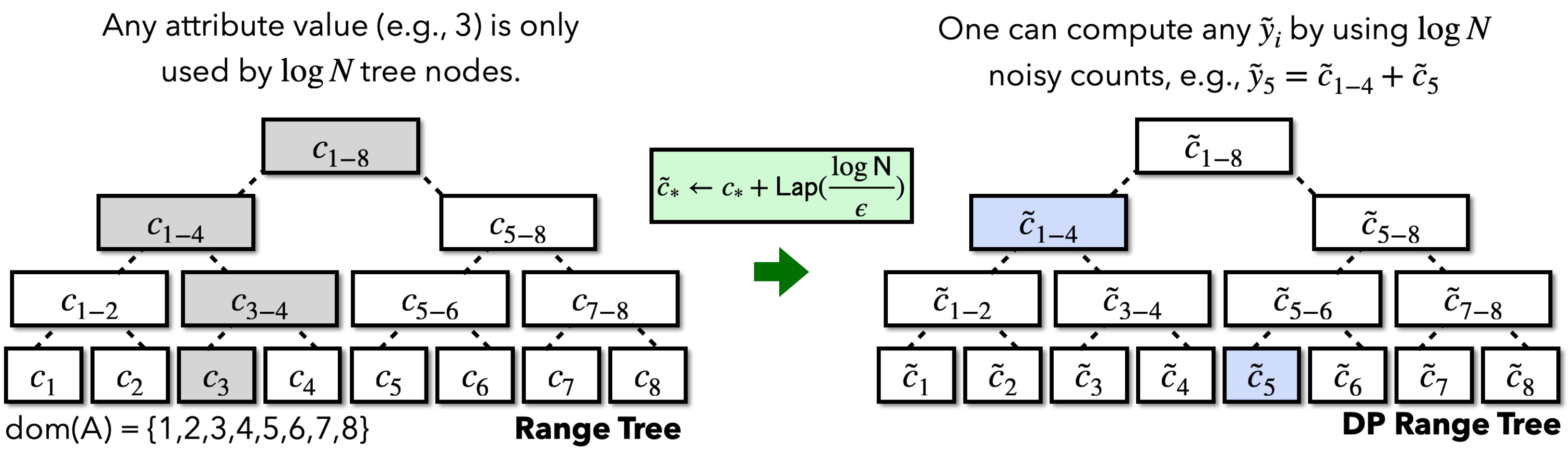}
    \caption{Range tree and DP range tree.}
    \label{fig:range}
    \vspace{-1em}
\end{figure}

{\bf Range tree based noisy CFC generation.} To address the first challenge, we drew inspiration from prior work on the continual release of time-series data~\citep{chan2011private,wang2023private} and propose a similar range tree-based approach for deriving noisy CFCs. This method allow us to bound the DP induced errors within $O((\log{N})^{3/2})$. 
At a high level, we construct a binary range tree over the sorted attribute domain $\texttt{dom}(A) = \{x_i\}^{N}_{i=1}$. Each leaf node represents a single attribute value $x_i \in \texttt{dom}(A)$ and stores the frequency of $x_i$. Each internal node represents the combined attribute range of its two child nodes and stores the frequency of records whose attribute
$A$ falls within that range. The tree construction proceeds upward recursively until we reach the root node, which represents the entire attribute range $[x_1, x_N]$. We show an example in Figure~\ref{fig:range}. Next, we convert this tree into a DP range tree by adding Laplace noise to each frequency count stored in every node. Since tuples with a certain attribute value contribute to at most \(\log{N}\) frequency counts in the tree, thus by the parallel composition theorem of DP~\citep{dwork2014algorithmic}, the per-instance DP noise can be set as \(\mathsf{Lap}(\frac{\log{N}}{\epsilon})\). This ensures that after adding noise to all frequency counts, the entire mechanism still satisfies \(\epsilon\)-DP. Note that each $\tilde{y}_i$ can be derived using at most $\log{N}$ noisy frequencies from the DP range tree, with each acting as a partial sum (p-sum) contributing to the final value of \(\tilde{y}_i\). We show an example of this approach in Figure~\ref{fig:range} (right).

\begin{theorem}\label{tm:bound} For any CFC $\{x_i, y_i\}_{i=1}^{N}$, and let $\{x_i, \tilde{y}_i\}_{i=1}^N$ to be the corresponding noisy CFC released by the range tree mechanism. Then, $\forall y_i$, the error $|\tilde{y}_i - y_i|$ is bounded by {$O(\epsilon^{-1}{(\log{N})^{\frac{3}{2}}})$.}
\end{theorem}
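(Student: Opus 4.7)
The plan is to combine a structural decomposition of $\tilde{y}_i$ into at most $\log N$ independent noisy partial sums with a standard concentration bound for sums of independent Laplace random variables. The statement should be read as a high-probability bound (the implicit constant absorbs a polylogarithmic failure probability), so the target is: with probability at least $1-\beta$, for all $i$ simultaneously, $|\tilde{y}_i - y_i| = O(\epsilon^{-1} (\log N)^{3/2} \sqrt{\log(N/\beta)})$, which matches the claimed $O(\epsilon^{-1}(\log N)^{3/2})$ up to a $\log$-factor by taking $\beta = 1/\mathrm{poly}(N)$.

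First, I would formalize the decomposition. By construction of the binary range tree, the exact CF $y_i = \sum_{j \le i} \texttt{F}(x_j)$ can be written as the sum of frequency counts stored at a canonical set $S_i$ of internal/leaf nodes whose attribute ranges form a disjoint partition of $[x_1, x_i]$. A standard argument about balanced binary trees gives $|S_i| \le \lceil \log_2 N \rceil$. The noisy CFC value is $\tilde{y}_i = \sum_{v \in S_i} \tilde{c}_v$, where $\tilde{c}_v = c_v + Z_v$ and the $Z_v$ are independent $\mathsf{Lap}(\log N /\epsilon)$ variables (this independence is guaranteed because the tree noise is added once per node). Hence
\begin{equation*}
\tilde{y}_i - y_i \;=\; \sum_{v \in S_i} Z_v,
\end{equation*}
a sum of at most $\log N$ i.i.d. mean-zero Laplace variables with common scale $b = \log N / \epsilon$.

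Second, I would apply a concentration inequality for sums of independent Laplace variables. A standard route is via the moment generating function: for $|t|<1/b$, $\mathbb{E}[e^{t Z_v}] = 1/(1-b^2 t^2)$, so a Bernstein-type tail bound gives, for any $\tau \ge 0$,
\begin{equation*}
\Pr\!\left[\,\Big|\sum_{v \in S_i} Z_v\Big| \ge \tau\,\right] \;\le\; 2\exp\!\left(-\,c\,\min\!\left\{\tfrac{\tau^2}{|S_i|\,b^2},\;\tfrac{\tau}{b}\right\}\right)
\end{equation*}
for an absolute constant $c>0$. Choosing $\tau = C \cdot b \sqrt{|S_i| \log(N/\beta)}$ for a sufficiently large constant $C$ makes the right-hand side at most $\beta/N$. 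Plugging in $b = \log N/\epsilon$ and $|S_i| \le \log N$ yields $\tau = O\!\left(\epsilon^{-1}(\log N)^{3/2}\sqrt{\log(1/\beta)}\right)$.

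Finally, I would take a union bound over all $N$ values of $i$ to conclude that the bound holds simultaneously for every $\tilde{y}_i$ with probability at least $1-\beta$, which gives the stated asymptotic $O(\epsilon^{-1}(\log N)^{3/2})$ bound (treating $\log(1/\beta)$ as a constant by choosing $\beta$ polynomially small). The main obstacle, and the only nonroutine step, is invoking the right concentration inequality for Laplace sums; one must be careful that $|S_i|$ is deterministic (it depends only on the tree structure, not on the data), so the $Z_v$ indexed by $S_i$ are genuinely independent and the Bernstein bound applies cleanly. Everything else—tree construction, the $\log N$ depth of $S_i$, and the per-node noise calibration to $\mathsf{Lap}(\log N/\epsilon)$ justified by parallel composition over disjoint leaf contributions—follows directly from the setup already established in the excerpt.
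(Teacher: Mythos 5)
Your proposal is correct and follows essentially the same route as the paper: it decomposes $\tilde{y}_i - y_i$ into a sum of at most $\log N$ independent $\mathsf{Lap}(\log N/\epsilon)$ partial-sum noises and applies an MGF/Bernstein-type concentration bound for sums of independent Laplace variables (the paper invokes exactly this as its lemma on Laplace sums, following Dwork--Roth), yielding $O\left(\epsilon^{-1}(\log N)^{3/2}\sqrt{\ln(1/\beta)}\right)$ with probability $1-\beta$. The only difference is cosmetic: you add an explicit union bound over $i$ to make the ``for all $y_i$'' guarantee simultaneous, which the paper leaves implicit.
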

The privacy-induced error in the range-tree mechanism is determined by the combined effect of Laplace noise. Since each \(\tilde{y}_i\) is computed using at most \(\log{N}\) noisy counts, each perturbed by \(\mathsf{Lap}(\frac{\log{N}}{\epsilon})\) noise, hence, Theorem~\ref{tm:bound} can be proven by applying the bound on the sum of Laplace random variables~\citep{dwork2014algorithmic, wang2021dp}. In algorithm~\ref{alg:binary_mechanism}, we outline the formal procedure for implementing the range-tree based noisy CFC releasing, including additional optimizations for dynamically recycling unused noisy frequencies.
\begin{algorithm}
\begin{algorithmic}[1]
\Statex
\textbf{Input}: $D$ sorted by $A$; Privacy parameter \( \epsilon > 0 \).
\smallskip
\State Compute histogram $H(D, A) = \{c_i=F(x_i, D, A)\}^{N}_{i=1}$, and $\{x_i\}_{i=1}^{N} = \texttt{dom}(A)$
\State Set p-sum vectors $\{\alpha_i=0\}^{N}_{i=1}$, $\{\tilde{\alpha}_i=0\}^{N}_{i=1}$, and per instance privacy budget \( \epsilon' = \epsilon / \log_2 N \).
\For{each \( t = 1, 2, \ldots, N \)}
    \State Write $t$ in binary form \( t = \sum_j 2^j\mathsf{bin}_j(t) \), where \( \mathsf{bin}_j(t) \in \{0, 1\} \).
    \State Let \( i = \min\{ j : \text{Bin}_j(t) \neq 0 \} \), update the p-sum $\alpha_i = \sum_{j < i} \alpha_j + c_t$
    \For{\( j < i \)} reset \( \alpha_j = 0 \) and \( \tilde{\alpha}_j = 0 \) \Comment{Dynamic recycle unused p-sums.}
    \EndFor
    \State Generate noisy p-sum: \( \tilde{\alpha}_i = \alpha_i + \text{Lap}\left( \frac{1}{\epsilon'} \right) \).
    \State $\tilde{y}_t = \sum_{j: \text{Bin}_j(t) = 1} \tilde{\alpha}_j$ \Comment{Find noisy p-sums for covering $\tilde{y}_t$.}
\EndFor
\State 	{\bf return} \(\{(x_i,\tilde{y}_i)\}_{i=1}^N \).
\end{algorithmic}
\caption{Optimized noisy range tree mechanism for generating noisy CFC}
\label{alg:binary_mechanism}
\end{algorithm}

\noindent{\bf Piece-wise linear regression (PLR) on noisy CFCs.} Once we have the noisy CFC, the next step is to find proper model to learn a compact representation of this CFC. Specifically, we select PLR models for this task, for two primary reasons. First, as the CFC is monotonic and low-dimensional (2d integer data), a PLR model can straightforward fit the trend without overcomplicating the representation. This allows us to represent the CFC with a minimal number of parameters, ensuring model simplicity and storage efficiency (P-3). Second, recent advancements in error-bounded PLR~\citep{xie2014maximum} allow one to approximate each data point in a given dataset within a fixed error bound \( \tau \). This enables us to enforce a utility goal by controlling the approximation accuracy. Furthermore, one can also manage the model complexity by adjusting the error bound \( \tau \); smaller values of \( \tau \) typically result in fewer segments and, consequently, a smaller model size. This flexibility allows us to balance between utility (P-2) and storage efficiency (P-3) goals, depending on the specific application requirements. We summarize key steps for applying PLR on noisy CFC in Algorithm~\ref{alg:piecewise_regression}.

\begin{algorithm}
\begin{algorithmic}[1]
\Statex
\textbf{Input}: A sequence of noisy CFC \(\{(x_i, \tilde{y}_i)\}_{i=1}^N \).
\smallskip
\State  \( \{x_i, \tilde{y}_i\}^{N}_{i=1} \gets \texttt{Isotonic\_Regression} (\{x_i, \tilde{y}_i\}^{N}_{i=1}) \) \Comment{monotonic increasing.}
\For{\( i = 1, 2, \ldots, N \)} \( \tilde{y}_i = \min(\max(0, \tilde{y}_i), |D|) \)\Comment{clipping.}
\EndFor
\State $(\texttt{seg} = (v_1, v_2,..., v_k=x_N), f=\{f_i\}^{k}_{i=1}) \gets \texttt{Err\_Bounded\_PLR} (\{x_i, \tilde{y}_i\}^{N}_{i=1}, \tau)$
\State {\bf return} $(\texttt{seg}, f)$
\end{algorithmic}
\caption{Linear model learned indexes from noisy CFCs}
\label{alg:piecewise_regression}
\end{algorithm}

In general, we first apply constrained post-processing to the released noisy CFC to ensure that the noisy CFC adheres to its meaningful context. Specifically, this includes applying isotonic regression to enforce the monotonic increasing property of the CFC, followed by a clipping to ensure that all CF values fall within the valid range $(0, |D|]$. Next, we adopt PLR on the processed CFC, which segments the domain \( \texttt{dom}(A_i) \) into a limited number of disjoint intervals \( [v_{j}, v_{j+1}) \), where \( j = 1, 2, \ldots, M \) and \( M \) is the total number of segments. For each segment \( j \), an individual linear model \( f_j(x) = a_j x + b_j \) is learned to approximate the subset of the CFC within the interval \( [v_{j}, v_{j+1}) \) by minimizing the squared error between the model's predictions and the noisy CFC values. Finally, we output the overall partition, \( \texttt{seg} = (v_1, v_2, \ldots, v_M) \), as well as the set of linear models for each segment, \( f = \{f_i\}_{i=1}^{M} \). Given any input \( x_i \in \texttt{dom}(A) \), to make predictions, the PLR model outputs
\[
f(x_i) = f_j(x_i) \quad \text{for } x_i \in [v_j, v_{j+1}), \quad \text{where } j < M.
\]

\noindent{\bf Nearly lossless index lookup.}
We now show how to make index lookup using the learned PLR. A straightforward approach could be directly use the PLR's predicted position for constructing indexing ranges, for instance, for any $x_i \in \texttt{dom}(A)$, we output $\textsc{PIdx}(x_i)=[f(x_{i-1}), f(x_i))$. Note that due to the potential noises from both DP CFC generation and the PLR learning process, it is possible that $[y_{i-1}, y_i) \setminus [f(x_{i-1}), f(x_i))\neq \emptyset$ and thus leading to a set of matching data to be lost if we follow  $\textsc{PIdx}(x_i)$ to fetch data. To address this issue, we propose a pessimistic indexing method such that with high probability, it holds that $[y_{i-1}, y_i) \subseteq \textsc{PIdx}(x_i)$. Specifically, we first compute the inference errors, $e_i = |f(x_i) - \tilde{y}_i|$ of PLR on noisy CFC, for all $x_i \in \texttt{dom}(A)$, and subsequently determine the maximum error, $e_{\max} = \arg\max_{\{1,...,N\}}(e_i)$. When using error-bounded PLR, one can approximate \( e_{\max} \) directly by the error bound \( \tau \) without additional computation. However, in practice, the empirical \( e_{\max} \) can be much smaller than \( \tau \)~\citep{xie2014maximum}, so tha it is important to compute the exact \( e_{\max} \). Then, for any index lookup, say with key $x_i$, we compute a pessimistic overestimated indexing range as 
$$\textsc{PIdx}(x_i) = \left[\min\left(0, f(x_{i-1}) - e_{\max} - Z\right), \max(|D|, f(x_i) + e_{\max} + Z)\right)$$
where $Z = \alpha_{s}\epsilon^{-1}(\log{N})^{3/2}$ where $\alpha_{s}\geq1$. Since $Z$ is proportional to the noisy CFC's error bound, and the inference error of PLR (on noisy CFC) is bounded by $e_{\max}$. Hence, when $\alpha_{s}$ is set to properly large, then it holds $\textsc{PIdx}$ covers $[y_{i-1}, y_i)$ with high probability, which implies strong lossless indexing guarantees. Moreover, as $e_{\max}$ is computed entirely as a post-processing step on the noisy CFC, and $Z$ is determined by public parameters. As a result, this process does not introduce any privacy loss.

\section{Formal analysis}
In this section, we provide a formal analysis of our proposed DP PLR indexes and compare them against the SOTA DP indexes. Our focus will be on their critical guarantees including privacy (P-1), utility (P-2) and storage efficiency (P-3).

\textbf{Comparison baselines.} We select three existing DP indexes for comparison: the DP $B^{+}$ tree~\citep{sahin2018differentially}, Crypt$\epsilon$~\citep{chowdhury2019crypt}, and SPECIAL indexes~\citep{wang2024special}. Since our focus is on static data indexes, for fair comparison reasons, we exclude DP index variants that are optimized primarily for dynamic data~\citep{zhang2023longshot}. 

{\bf General Setting.} We consider a sorted database $D$ by attribute $A$, with the true CFC represented as $\{x_i, y_i\}_{i=1}^{N}$. Our analysis evaluates four key metrics related to index lookups for answering predicate queries. Specifically, we focus on a predicate query that retrieves data matching a specific attribute value (e.g., point query). We note that the techniques presented here can be extended to other types of predicate queries, such as range queries or arbitrary conjunctive queries, suggesting that our analysis of point queries is sufficient. Specifically, we consider the following metrics: (i) {\em Query error}, which represents the total number of missing data tuples for the index lookup; (ii) {\em Query overhead}, indicating the total number of irrelevant tuples indexed; (iii) {\em Index storage}, which measures the storage required for the index structures; and (iv) {\em Data overhead}, reflecting the number of dummy tuples that must be inserted to the underlying data to support the proposed index. Our analysis will focus on the probabilistic upper bounds, say $\alpha$, for each of the aforementioned measures, such that with probability at least $1 - \beta$ (for $\beta>0$), these measures do not exceed the corresponding $\alpha$.

\subsection{Results}
We present our formal analysis results in Table~\ref{tab:formal}, with the detailed derivation and computing steps of these upper bounds deferred to~\ref{app:formal} for brevity. In the following, we will focus on discussing key observations from the formal analysis result.
\begin{table}[]
\caption{Formal comparison between DP-PLR and existing DP indexes}
\vspace{2mm}
\label{tab:formal}
\scalebox{0.72}{
\begin{tabular}{ccccc}
\hline
\multicolumn{1}{|l|}{}                               & \multicolumn{1}{c|}{\textbf{DP $B^{+}$ Tree}}                                         & \multicolumn{1}{c|}{\textbf{Crypt$\epsilon$}}                                & \multicolumn{1}{c|}{\textbf{SPECIAL}}                                             & \multicolumn{1}{c|}{\textcolor{blue}{\textbf{DP-PLR (ours)}}}                                                                                        \\ \hline
\multicolumn{1}{|c|}{\textbf{Query error}}           & \multicolumn{1}{c|}{$O\left(\frac{2\log{N}{\ln(\frac{2}{\beta})}}{\epsilon}\right)-B^{\ddagger}$} & \multicolumn{1}{c|}{$O\left(\frac{2N{\ln(\frac{2}{\beta})}}{\epsilon}\right)$} & \multicolumn{1}{c|}{0}                                                            & \multicolumn{1}{c|}{\textcolor{blue}{$O\left(\frac{(2\log{N})^{\frac{3}{2}}\sqrt{\ln(\frac{2}{\beta})}}{\epsilon}\right) - \alpha_{s}\frac{(\log{N})^{\frac{3}{2}}}{\epsilon}$}} \\
\multicolumn{1}{|c|}{\textbf{Query overhead}}        & \multicolumn{1}{c|}{$O\left(\frac{2\log{N}{\ln(\frac{2}{\beta})}}{\epsilon}\right)+B$} & \multicolumn{1}{c|}{$O\left(\frac{2N{\ln(\frac{2}{\beta})}}{\epsilon}\right)$} & \multicolumn{1}{c|}{$O\left(\frac{4N{\ln(\frac{1}{\beta})}}{\epsilon}\right)+\mu$} & \multicolumn{1}{c|}{\textcolor{blue}{$O\left(\frac{(2\log{N})^{\frac{3}{2}}\sqrt{\ln(\frac{2}{\beta})}}{\epsilon}\right) + 2\alpha_{s}\frac{(\log{N})^{\frac{3}{2}}}{\epsilon}+2\tau$}}                 \\
\multicolumn{1}{|c|}{\textbf{Index size (bits)}}     & \multicolumn{1}{c|}{$64(2N-1)$}                                                       & \multicolumn{1}{c|}{$64N$}                                                   & \multicolumn{1}{c|}{$128N$}                                                       & \multicolumn{1}{c|}{\textcolor{blue}{$128M$ or $O(1)$ when $M$ is small}}                                                                                                 \\
\multicolumn{1}{|c|}{\textbf{Data overhead}} & \multicolumn{1}{c|}{$O\left(\frac{2\log{N}\sqrt{N\ln(\frac{2}{\beta})}}{\epsilon}\right)+NB$} & \multicolumn{1}{c|}{0}                                                       & \multicolumn{1}{c|}{0}                                                            & \multicolumn{1}{c|}{\textcolor{blue}{0}}                                                                                                      \\
\multicolumn{1}{|c|}{\textbf{Privacy}}     & \multicolumn{1}{c|}{$(\epsilon)$-DP}                                              & \multicolumn{1}{c|}{$(\epsilon)$-DP}                                     & \multicolumn{1}{c|}{$(\epsilon, \delta)$-DP}                                          & \multicolumn{1}{c|}{\textcolor{blue}{$(\epsilon)$-DP}}                                                                                    \\ \hline
\multicolumn{5}{l}{$^{\ddagger}$. $B$ denotes the overflow array size, which denotes the fixed number of extra dummy injected to each leaf node data~\citep{sahin2018differentially}.}         
\end{tabular}}
\vspace{-1em}
\end{table}

{\bf Observation 1. Our DP-PLR offers significantly lower storage costs, potentially constant in size, for the index structure compared to existing DP indexes, which typically require storage linear to the key size \( N \). } The storage costs of existing DP indexes (all baselines) are linear in the key size, \( O(N) \), which can be substantial for attributes with large domain sizes (e.g., salary, mortgage). In contrast, the storage cost of DP-PLR is determined by the number of PLR models, \( M \). Notably, we can adjust the error bound \( \tau \) during PLR training or enforce an upper limit on \( M \) to control the maximum number of models. Thus, it is reasonable to assume that \( M \sim O(1) \). This compact storage size also results in faster lookup times. For example, given \( x_i \), the lookup time in DP-PLR is \( O(\log{M}) \sim O(1) \), assuming a binary search to locate the segment linear model for predicting \( x_i \)'s position. In contrast, the other three methods either require searching an array of size \( N \) (Crypt$\epsilon$ and SPECIAL) or traversing a tree with \( N \) leaf nodes to locate \( x_i \), making their lookup time no better than \( O(\log{N}) \).

{\bf Observation 2. DP-PLR is the only method that achieves (nearly) lossless indexing without requiring the insertion of dummy data, while still maintaining pure \( \epsilon \)-DP guarantees.} From the indexing error bound, we observe that when \( \alpha_s \) is properly set, for instance \( \alpha_{s} \geq O(\sqrt{\ln(\beta^{-1})}) \), the probability of achieving no indexing error is at least \( 1-\beta \) with DP-PLR. This near-lossless guarantee (with high probability) is possible because DP-PLR accurately estimates the error bounds for both the noisy CFC and the PLR prediction, smoothing these errors with pessimistic overestimation. While the DP \( B^+ \) tree has similar potential if the overflow array size \( B \) is set large enough, it requires a significant amount of dummy data, at least proportional to \( O(\sqrt{N}\log{N}) \). In contrast, DP-PLR avoids the need to inject any dummy data. Moreover, while SPECIAL achieves a deterministic lossless guarantee, it uses one-sided Laplace noise, which limits it to providing only \( (\epsilon, \delta) \)-DP.

{\bf Observation 3. DP-PLR shows significantly lower indexing overhead compared to Crypt$\epsilon$ and SPECIAL, and is only marginally larger than DP $B^{+}$ by an asymptotic factor of \( O(\sqrt{\log{N}}) \). However, DP-PLR eliminates the need to inject any dummy data into the underlying data.} When setting \( \alpha_s = O(\sqrt{\ln{(\beta^{-1})}}) \) and \( B \) proportional to \( O(2\epsilon^{-1}\log{N}{\ln(\beta^{-1})}) \), both DP $B^{+}$ and DP-PLR provide the same accuracy guarantee (i.e., no error with probability at least \( 1-\beta \)). When comparing their overhead, DP-PLR has a slightly larger asymptotic factor of \( O(\sqrt{\log{N}}) \) than DP $B^{+}$. However, in terms of data storage overhead, DP $B^{+}$ is \( O(\log{N}\sqrt{{N}}) \) larger than DP-PLR. When compared to other noisy CFC-based DP indexes, such as Crypt$\epsilon$ and SPECIAL, these methods have an overhead of at least \( O(N/\log^{3/2}{N}) \) relative to DP-PLR. This demonstrates that DP-PLR also minimizes overhead compared to state-of-the-art noisy CFC-based DP indexes.

\vspace{-1em}
\section{Conclusion}
\vspace{-1em}
In this work, we propose DP-PLR, the first differentially private (DP) learned index designed to efficiently process predicate queries on encrypted databases. Our formal analysis demonstrates that DP-PLR significantly reduces index storage costs, shifting from the traditional linear dependence on key sizes to potentially constant size. Furthermore, to the best of our knowledge, DP-PLR is the only approach that achieves (nearly) lossless indexing without requiring the insertion of dummy data, while maintaining strict $\epsilon$-DP guarantees. Its query overhead also exhibits asymptotically better performance compared to the current state-of-the-art lossless private index, SPECIAL. Additionally, we identify several exciting opportunities for future optimization, such as enabling private and efficient updates to DP-PLR, supporting concurrent yet private lookups, and leveraging available public data to fine-tune the index for improved accuracy and reduced query overhead.

\bibliography{iclr2024_conference}
\bibliographystyle{iclr2024_conference}

\appendix
\section{Appendix}
\subsection{Formal analysis}\label{app:formal}
{\bf DP $B^{+}$ tree indexes~\citep{sahin2018differentially}.} Given an index lookup with attribute value $k$, the DP $B^{+}$ method will return (pointers to) all tuples linked to the $k$ matching leaf nodes. Since Laplace noise $Z$ in the scale of $Z \sim \texttt{Lap}(\log{N}/\epsilon)$ is added to distort number of records to be returned. So that we know that $Z$ will be the primary factor that causes query error (missing tuples due to negative values of $Z-B < 0$) or overhead ( when $Z-B >0$). Hence, the following analysis will be primarily based on the upper bounds of $Z$. Since \( Z \sim \texttt{Lap}(\log{N}/\epsilon) \), and thus the probability density function (PDF) of \( Z \) is \( f_Z(z) = \frac{1}{2b} \exp\left(-\frac{|z|}{b}\right) \) for \( z \in \mathbb{R} \). The tail probability of the Laplace distribution is given by \( P(|Z| > \alpha) = \exp\left(-\frac{\alpha}{2b}\right) \). When setting \( \exp\left(-\frac{\alpha}{b}\right) \leq \beta \), and taking the natural logarithm of both sides, one can obtain \( \alpha \geq -b \ln(\beta) \). Substituting \( b = \frac{\log N}{\epsilon} \) gives:
\[
\alpha \geq \frac{\log N}{\epsilon} \ln\left(\frac{1}{\beta}\right).
\]
By symmetry of the Laplace noises, we know that with probability at least $\beta$ such that $Z > \frac{\log N}{\epsilon} \ln\left(\frac{2}{\beta}\right)$ or $Z < -\frac{\log N}{\epsilon} \ln\left(\frac{2}{\beta}\right)$. So that we can derive the query error bound as $\max(0, -(Z+B))$, which is $O\left(\frac{\log N \ln\left(\frac{2}{\beta}\right)}{\epsilon} \right)-B$, and similarly the overhead is $O\left(\frac{\log N \ln\left(\frac{2}{\beta}\right)}{\epsilon} \right) + B$.

Next, we analyze the total storage overhead of the DP $B^{+}$ tree indexes. So in general, the total overhead is determined by $Z_1 + ... + Z_N$, where each $Z_i$ denotes the noise added to the $i^{th}$ leaf node for populating the tree node storage. To derive the upper bounds, here we need the Lemma 10 in~\citet{wang2022incshrink} (or Lemma 12.2 in~\citet{dwork2014algorithmic}). So for completeness, we reproduce the lemmas as below:
\begin{lemma}\label{lemma:sum}
Given $M$ i.i.d. Laplace random variables, $Z_1,...,Z_N$, where each $Y_i$ is sampled from $\textup{Lap}(\frac{\Delta}{\epsilon})$. Let $0 < \alpha \leq N\frac{\Delta}{\epsilon}$, the following inequality holds
$$\textup{Pr}\left[~ \sum_{1}^{N}{Z_i} \geq \alpha \right] \leq e^{\left( \frac{-\alpha^2\epsilon^2}{4N\Delta^2} \right)}$$
\end{lemma}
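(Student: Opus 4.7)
The plan is to prove this Laplace-sum tail inequality via the standard Chernoff/moment-generating-function (MGF) method, which is the natural tool for concentration of sums of independent light-tailed variables. The argument has three ingredients: the MGF of a single Laplace variable, independence to factor the MGF of the sum, and Markov's inequality composed with an optimization over the Chernoff parameter.

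First I would record the MGF of a single $Z_i \sim \mathrm{Lap}(b)$ with $b = \Delta/\epsilon$: a direct integration of the Laplace density gives the classical identity $\mathbb{E}[e^{t Z_i}] = (1-b^2 t^2)^{-1}$, valid for $|t| < 1/b$. By independence, the MGF of $S = \sum_{i=1}^{N} Z_i$ factorizes to $\mathbb{E}[e^{tS}] = (1 - b^2 t^2)^{-N}$ on the same range. For any $t > 0$, Markov's inequality applied to $e^{tS}$ then yields $\Pr[S \geq \alpha] \leq e^{-t\alpha}(1 - b^2 t^2)^{-N}$.

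Next I would convert this into a sub-Gaussian-style exponent. Using the elementary inequality $\frac{1}{1-x} \leq e^{2x}$ on $x \in [0, 1/2]$ gives $(1 - b^2 t^2)^{-N} \leq \exp(2N b^2 t^2)$ whenever $b^2 t^2 \leq 1/2$, so that $\Pr[S \geq \alpha] \leq \exp(-t\alpha + 2N b^2 t^2)$. The exponent is a convex quadratic in $t$, minimized at $t^\star = \alpha/(4 N b^2)$ with minimum value $-\alpha^2/(8 N b^2)$. The feasibility check $b^2 (t^\star)^2 \leq 1/2$ reduces to $\alpha \leq 2\sqrt{2}\, N b$, which is comfortably implied by the lemma's hypothesis $\alpha \leq N b = N\Delta/\epsilon$. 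Substituting $b = \Delta/\epsilon$ delivers a tail bound of the claimed form $\exp\!\bigl(-c\,\alpha^2 \epsilon^2/(N\Delta^2)\bigr)$.

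The main (and essentially only) obstacle is bookkeeping the constant to recover the stated $1/(4N)$ rather than the looser $1/(8N)$ that the cleanest version of the argument above produces. This is tightened by using a sharper inequality on $(1-x)^{-1}$ over a more restricted window (e.g., $-\ln(1-x) \leq x/(1-x)$, or a direct numerical comparison on the feasible range of $t$), at the cost of slightly more careful case analysis; no new idea is needed. The lemma's hypothesis $\alpha \leq N\Delta/\epsilon$ is not incidental---it is exactly what keeps $t^\star$ inside the convergence strip of the Laplace MGF, which is the regime in which a sub-Gaussian-type bound is valid for this otherwise heavier-tailed sum. Symmetry of the Laplace law immediately upgrades the one-sided statement to a two-sided bound on $|S|$ if needed downstream (as in the proof of Theorem~\ref{tm:bound}).
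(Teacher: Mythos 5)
Your Chernoff/MGF argument is exactly the route taken by the sources the paper leans on for this lemma (the paper itself supplies no proof, only a citation to Dwork--Roth and to the appendix of prior work), and everything up to the constant is correct: the MGF identity $\mathbb{E}[e^{tZ_i}]=(1-b^2t^2)^{-1}$, the factorization over independent summands, the bound $1/(1-x)\le e^{2x}$ on $[0,1/2]$, the optimizer $t^\star=\alpha/(4Nb^2)$, and the feasibility check all hold, and they deliver $\Pr[\sum_i Z_i\ge\alpha]\le\exp\!\left(-\alpha^2\epsilon^2/(8N\Delta^2)\right)$ for $\alpha\le 2\sqrt{2}\,N\Delta/\epsilon$ --- which is precisely the statement proved in the cited references.

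The genuine gap is your closing claim that the remaining factor of two is ``bookkeeping'' and that a sharper bound on $(1-x)^{-1}$ recovers the stated constant $1/4$ with no new idea. It cannot. Writing $b=\Delta/\epsilon$, $u=tb$, and $r=\alpha/(Nb)\in(0,1]$, the best exponent any Chernoff argument can produce is $N\sup_{0\le u<1}\left[ur+\ln(1-u^2)\right]$; the optimum is attained at $u=(\sqrt{1+r^2}-1)/r$ and equals $r^2/4-r^4/32-O(r^6)$, strictly below $r^2/4$ for every $r>0$. At the endpoint $r=1$ it evaluates to $\sqrt{2}-1+\ln(2\sqrt{2}-2)\approx 0.226<0.25$, and since Cram\'er's theorem says this exponent is tight to first order, $\Pr[\sum_i Z_i\ge Nb]\ge e^{-0.24N}$ for all large $N$: the inequality with constant $1/4$ is simply false at the upper end of the stated range, so no refinement of the argument (yours or anyone's) can reach it. The defect lies in the lemma's transcription rather than in your proof --- the standard statement carries $1/8$ in the exponent and the wider range $\alpha\le 2\sqrt{2}\,N\Delta/\epsilon$ --- and you should stop at $1/8$, noting that the downstream uses (Theorem~\ref{tm:bound} and Table~\ref{tab:formal}) only invoke the bound in $O(\cdot)$ form, so the constant is immaterial there. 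Your remark that symmetry upgrades the result to a two-sided bound on $|\sum_i Z_i|$ is correct and is indeed what the applications need.
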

\begin{proof}The complete proof of Lemma~\ref{lemma:sum} can be found in the Appendix C.1 of~\cite{wang2021dp} and in~\cite{dwork2014algorithmic}. 
\end{proof}
By setting $e^{\frac{-\alpha^2\epsilon^2}{4M\Delta^2}} = {\beta}$, and when $M > 4\ln{\frac{1}{\beta}}$ the following inequality holds
$$\textup{Pr}\left[~ \sum_{i=1}^{N} Z_i \geq 2\frac{\Delta}{\epsilon}\sqrt{N\ln{\frac{1}{\beta}}} ~\right] \leq \beta $$
Hence, we can derive the total storage overhead upper bound as $O\left(\frac{2\log N \sqrt{N\ln{\frac{2}{\beta}}}}{\epsilon} \right) + NB$

The DP $B^{+}$ tree has a total of $2N - 1$ nodes, where each node needs to store at least an integer\footnote{We assume 64 bits for all integers and address spaces}. Thus, the index storage requires $64(2N - 1)$ bits.

{\bf Crypt$\epsilon$ indexes~\citep{chowdhury2019crypt}.} The Crypt$\epsilon$ indexes use the noisy CFC model, so they directly release $\{x_i, \tilde{y}_i\}_{i=1}^{N}$. Errors occur when $\tilde{y}_i < y_i$, determined by $y_i - \tilde{y}_i = -\texttt{Lap}(N/\epsilon)$. Using the same upper bound technique as the DP $B^{+}$ tree, both the query error and overhead of Crypt$\epsilon$ indexes are within $O\left(\frac{2N \ln\left(\frac{2}{\beta}\right)}{\epsilon} \right)$. As Crypt$\epsilon$ store only the noisy CFC, the index storage is $64N$ bits, and there is no data storage overhead, as it does not use dummy data.

{\bf SPECIAL indexes~\citep{wang2024special}.} The SPECIAL indexes use the same noisy CFC model as Crypt$\epsilon$, but release two CFCs with one-sided Laplace noise—shifted to have mean $\mu > 0$ for positive (or $\mu < 0$ for negative) values and truncated at 0. This method achieves $(\epsilon, \delta)$-DP instead of $\epsilon$-DP. As SPECIAL consistently overestimate the indexing range, no query errors occur, but query overhead may arise. Since the Laplace noises are with shifted means, and both CFC can lead to overheads. Since SPECIAL introduces two noisy curves, so that the overhead is bounded by $O\left(\frac{4N \ln\left(\frac{2}{\beta}\right)}{\epsilon}\right) + \mu $. The index storage is doubled compared to the Crypt$\epsilon$ with $128N$ bits. Similarly, there is no data storage overhead.

{\bf DP-PLR Indexes (Our Method).} In our DP-PLR indexes, we will first look into the privacy induced errors in the binary mechanism for releasing noisy CFCs. Since each noisy CFC data point is released by using at most $\log N$ nodes, where each node is distorted by Laplace noise in the scale of $\texttt{Lap}(\log{N}/\epsilon)$, so that by Lemma~\ref{lemma:sum}, we can conclude that the privacy induced error bound of binary mechanism (for estimating each CFC data points) is within $O\left(\frac{2\log N \sqrt{\log N\ln{\frac{2}{\beta}}}}{\epsilon} \right)$, and further we can derive the query error bound as $O\left(\frac{2\log N \sqrt{\log N\ln{\frac{2}{\beta}}}}{\epsilon} \right) - \alpha_s \frac{\log^{3/2}N}{\epsilon}$. Note that the PLR process has bounded estimation error in $\tau$, and in the indexing, we use $\tau$ to carry out pessimistic overestimation, so that the query error bounds will not be affected but the overhead now becomes $O\left(\frac{2\log N \sqrt{\log N\ln{\frac{2}{\beta}}}}{\epsilon} \right) + 2\alpha_s \frac{\log^{3/2}N}{\epsilon} + 2\tau$, consider the overestimation is applied to both end points. The index storage for PLR is bounded by \( 2\times64\times M \) bits, where \( M \) is the number of linear models in the PLR. Similarly, DP-PLR does not use dummy data, resulting in no additional data storage overhead.


\eat{
\subsection{DP-SGD Training on classical RMI}

In the differentially private RMI, we incorporate DP-SGD at each training stage to ensure privacy. The key modifications involve computing per-example gradients, clipping their norms, adding Gaussian noise, and updating the model parameters accordingly.

At stage 0, we train \( f_0(x; \theta_0) \) using DP-SGD to minimize:
\[
L_0(\theta_0) = \frac{1}{N} \sum_{i=1}^N \ell_i(\theta_0),
\]
where \( \ell_i(\theta_0) = \left( f_0(x_i; \theta_0) - y_i \right)^2 \).

For each iteration in the training process:
\begin{enumerate}
    \item \textbf{Mini-Batch Selection}: Randomly select a mini-batch \( B \subset D \).
    \item \textbf{Compute Per-Example Gradients}: For each data point \( (x_i, y_i) \) in \( B \), compute the gradient of the loss with respect to the model parameters:
    \[
    g_i = \nabla_{\theta_0} \ell_i(\theta_0) = 2 \left( f_0(x_i; \theta_0) - y_i \right) \nabla_{\theta_0} f_0(x_i; \theta_0).
    \]
    \item \textbf{Gradient Clipping}: Clip each gradient to have a maximum \( \ell_2 \)-norm \( C \):
    \[
    \tilde{g}_i = \frac{g_i}{\max\left(1, \frac{\|g_i\|_2}{C}\right)}.
    \]
    This step ensures that the influence of any single data point is bounded.
    \item \textbf{Add Noise}: Compute the noisy average gradient:
    \[
    \tilde{g} = \frac{1}{|B|} \left( \sum_{i \in B} \tilde{g}_i + \mathcal{N}\left(0, \sigma^2 C^2 I\right) \right),
    \]
    where \( \sigma \) is the noise multiplier, \( C \) is the clipping norm, and \( I \) is the identity matrix matching the dimensionality of \( \theta_0 \).
    \item \textbf{Parameter Update}: Update the model parameters using the noisy gradient:
    \[
    \theta_0 \leftarrow \theta_0 - \eta \tilde{g},
    \]
    where \( \eta \) is the learning rate.
\end{enumerate}

For stages \( \ell \geq 1 \), each sub-model \( f_{\ell}^{(k)}(x; \theta_{\ell}^{(k)}) \) is trained independently on its assigned dataset \( D_{\ell}^{(k)} \) using DP-SGD:

\begin{enumerate}
    \item \textbf{Mini-Batch Selection}: Randomly select a mini-batch \( B \subset D_{\ell}^{(k)} \).
    \item \textbf{Compute Per-Example Gradients}: For each \( (x_i, y_i) \in B \), compute:
    \[
    g_i = \nabla_{\theta_{\ell}^{(k)}} \ell_i(\theta_{\ell}^{(k)}) = 2 \left( f_{\ell}^{(k)}(x_i; \theta_{\ell}^{(k)}) - y_i \right) \nabla_{\theta_{\ell}^{(k)}} f_{\ell}^{(k)}(x_i; \theta_{\ell}^{(k)}).
    \]
    \item \textbf{Gradient Clipping}: Clip gradients:
    \[
    \tilde{g}_i = \frac{g_i}{\max\left(1, \frac{\|g_i\|_2}{C}\right)}.
    \]
    \item \textbf{Add Noise}: Compute the noisy average gradient:
    \[
    \tilde{g} = \frac{1}{|B|} \left( \sum_{i \in B} \tilde{g}_i + \mathcal{N}\left(0, \sigma^2 C^2 I\right) \right).
    \]
    \item \textbf{Parameter Update}: Update parameters:
    \[
    \theta_{\ell}^{(k)} \leftarrow \theta_{\ell}^{(k)} - \eta \tilde{g}.
    \]
\end{enumerate}

\subsection{Proof of FMI training satisfies DP}\label{app:pf1}
We say that the private training on our FMI is inherently the directly adoption of DP-SGD, so that it does not affect the privacy guarantees from those provided by DP-SGD. For completeness, we provide a formal privacy analysis to here, which is in herently derived from the proof technique of DP-SGD~\cite{goodfellow2016deep}. Specifically, we will consider two neighboring datasets \( D \) and \( D' \), and we abstract the training process as a probabilistic mechanism $\mathcal{M}$. We will set the constraint $\sigma \geq {2\epsilon^{-1} C\sqrt{2 \ln(1.25/\delta)}}$, and then compute the ratio of the probabilities for the output of the mechanism \(\mathcal{M}(D)\) and \(\mathcal{M}(D')\) as:
\eat{
\[
\Theta \leftarrow  \Theta - \eta \left( \frac{1}{|B|} \left( \sum_{i \in B} \frac{g_i}{\max\left(1, \frac{\| g_i \|_2}{C} \right)} + \mathcal{N}\left(0, \sigma^2 C^2 I \right) \right) \right)
\]
}
\begin{align*}
&\ln \left( \frac{\Pr[\mathcal{M}(D) = \Theta]}{\Pr[\mathcal{M}(D{\prime}) = \Theta]} \right)
= \ln \left( \frac{ \frac{1}{(2 \pi \sigma^2 C^2)^{d/2}} \exp\left(- \frac{\| \Theta - \tilde{g}(D) \|_2^2}{2 \sigma^2 C^2} \right) }{ \frac{1}{(2 \pi \sigma^2 C^2)^{d/2}} \exp\left(- \frac{\| \Theta - \tilde{g}(D{\prime}) \|_2^2}{2 \sigma^2 C^2} \right) } \right) \\
&= \ln \left( \exp \left( \frac{ \| \Theta - \tilde{g}(D{\prime}) \|_2^2 - \| \Theta - \tilde{g}(D) \|_2^2 }{2 \sigma^2 C^2} \right) \right) \\
&= \frac{ \| \Theta - \tilde{g}(D{\prime}) \|_2^2 - \| \Theta - \tilde{g}(D) \|_2^2 }{2 \sigma^2 C^2} \\
&= \frac{ \| \Theta - \tilde{g}(D) + \tilde{g}(D) - \tilde{g}(D{\prime}) \|_2^2 - \| \Theta - \tilde{g}(D) \|_2^2 }{2 \sigma^2 C^2}\\
&= \frac{ \| \Theta - \tilde{g}(D) \|_2^2 + 2 (\tilde{g}(D) - \tilde{g}(D{\prime}))^\top (\Theta - \tilde{g}(D)) + \| \tilde{g}(D) - \tilde{g}(D{\prime}) \|_2^2 - \| \Theta - \tilde{g}(D) \|_2^2 }{2 \sigma^2 C^2}\\
&= \frac{ 2 (\tilde{g}(D) - \tilde{g}(D{\prime}))^\top (\Theta - \tilde{g}(D)) + \| \tilde{g}(D) - \tilde{g}(D{\prime}) \|_2^2 }{2 \sigma^2 C^2} = *
\end{align*}
Here, we apply Cauchy-Schwarz inequality, and which then gives us 
$$*\leq \frac{ 2 \| \tilde{g}(D) - \tilde{g}(D{\prime}) \|_2 \cdot \| \Theta - \tilde{g}(D) \|_2 + \| \tilde{g}(D) - \tilde{g}(D{\prime}) \|_2^2 }{2 \sigma^2 C^2}  = ** $$
Since the gradients are clipped to have max $\ell_2$ norm of $C$, so that $\| \tilde{g}(D) - \tilde{g}(D{\prime}) \|_2 \leq 2C$, and thus we can derive that the following
\begin{align*}
&** \leq \frac{ 2C \cdot \| \Theta - \tilde{g}(D) \|_2 + 4C^2 }{2 \sigma^2 C^2}
= \frac{ C \cdot \| \Theta - \tilde{g}(D) \|_2 }{\sigma^2 C^2} + \frac{ 2C^2 }{\sigma^2 C^2}
= \frac{ \| \Theta - \tilde{g}(D) \|_2 }{\sigma^2 C} + \frac{2}{\sigma^2}
\end{align*}

We substitute $\sigma =\frac{2C \sqrt{2 \ln(1.25/\delta)}}{\varepsilon}$, to the above equation, then we obtain
\[
\frac{ \| \Theta - \tilde{g}(D) \|_2 }{\sigma^2 C} + \frac{2}{\sigma^2} \leq \varepsilon
\]

\subsection{Proof of the Report Noisy Max Error is DP}\label{appx:noisy-max}
We say that to prove that Algorithm~\ref{alg:index-fmi} satisfies DP is equivalent to demonstrate the following.

Given two neighboring datasets \(x\) and \(x'\), which differ by at most one record, and a score function \(q(y; x)\) for each possible output \(y \in \mathcal{Y} = \{1, 2, \dots, d\}\), the score function \(q(y; x)\) has sensitivity \( |q(y; x) - q(y; x')| \leq \Delta \) for all \(y\). We then add to each score \(q(y; x)\) a random variable drawn from an exponential distribution \( Z_y \sim \text{Exp}\left( \frac{\epsilon}{2\Delta} \right) \).

Let the mechanism \( \mathsf{RNM}(x) = \arg\max_{y} \left(q(y;x) + Z_y \right) \). We prove that for any \(y\), and any neighboring datasets \(x\) and \(x'\), the following holds:
\[
\Pr[\mathsf{RNM}(x) = y] \leq e^{\epsilon} \cdot \Pr[\mathsf{RNM}(x') = y] + \delta.
\]

So in fact if we set $q(y; x)$ to be $\left|\hat{y}(x) - y \right|$ then the aforementioned scenario is equivalent to Algorithm~\ref{alg:index-fmi}. Hence, in what follows, we focus on proving the above scenario.

{\bf 1. $\mathsf{RNM}$ follows exponential distribution.} To prove DP, we will first need to compute the probability distribution of $\mathsf{RNM}$. The probability that \( y \) is selected is:
\[
\Pr[\mathsf{RNM}(x) = y] = \Pr\left( q(y; x) + Z_y > q(u; x) + Z_u, \quad \forall u \neq y \right).
\]
Because the exponential distribution is memoryless and the noise variables \( Z_y \) are i.i.d., so we can then describe the probability of selecting \( y \) (under $x$ and $x'$), using the exponential distribution
\[
\Pr[\mathsf{RNM}(x) = y] = \frac{\exp\left( \frac{\epsilon \cdot q(y; x)}{2\Delta} \right)}{\sum_{u \in \mathcal{Y}} \exp\left( \frac{\epsilon \cdot q(u; x)}{2\Delta} \right)} ;~~~~~ \Pr[\mathsf{RNM}(x') = y] = \frac{\exp\left( \frac{\epsilon \cdot q(y; x')}{2\Delta} \right)}{\sum_{u \in \mathcal{Y}} \exp\left( \frac{\epsilon \cdot q(u; x')}{2\Delta} \right)}.
\]

{\bf 2. Probability ratio between $x$ and $x'$ is bounded by $e^{\epsilon}$.}
In what follows, without loss of generity, we will assume $x' \leq x$, and we only prove one direction of the probability ratio, while by symmetric, the other side can be trivially implied. Now, we compute:
\[
\frac{\Pr[\mathsf{RNM}(x) = y]}{\Pr[\mathsf{RNM}(x') = y]} = \frac{\exp\left( \frac{\epsilon \cdot q(y; x)}{2\Delta} \right)}{\exp\left( \frac{\epsilon \cdot q(y; x')}{2\Delta} \right)} \cdot \frac{\sum_{u} \exp\left( \frac{\epsilon \cdot q(u; x')}{2\Delta} \right)}{\sum_{u} \exp\left( \frac{\epsilon \cdot q(u; x)}{2\Delta} \right)}.
\]

Note that \(\forall y |q(y; x) - q(y; x')| \leq \Delta. \Rightarrow q(y; x) \leq q(y; x') + \Delta, \quad q(y; x') \leq q(y; x) + \Delta.\), thus:
\[
\exp\left( \frac{\epsilon \cdot q(y; x)}{2\Delta} \right) \leq \exp\left( \frac{\epsilon \cdot (q(y; x') + \Delta)}{2\Delta} \right) = e^{\frac{\epsilon}{2}} \cdot \exp\left( \frac{\epsilon \cdot q(y; x')}{2\Delta} \right).
\]
For the denominator, since the exponential function is monotonic (and $x>x'$), so that:
\begin{align*}
& \sum_{u} \exp\left( \frac{\epsilon \cdot q(u; x) }{2\Delta} \right) \geq \exp\left( \frac{\epsilon \cdot q(y; x) }{2\Delta} \right)\\ 
\Rightarrow & \sum_{u}\exp\left( \frac{\epsilon \cdot q(u; x')}{2\Delta} \right) \leq e^{\frac{\epsilon}{2}} \cdot \exp\left( \frac{\epsilon \cdot q(u; x)}{2\Delta} \right).
\end{align*}

By applying the above bounds, we re-compute the probability ratio as
\[
\frac{\Pr[\mathsf{RNM}(x) = y]}{\Pr[\mathsf{RNM}(x') = y]} \leq e^{\frac{\epsilon}{2}}\cdot e^{\frac{\epsilon}{2}} = e^{\epsilon}.
\]


{\bf Additional discussions.} We conduct additional discussions on the case where we set a smaller the sensitivity bound $\Delta$ in the purist of better performance (e.g., smaller scale of noise to be added to $e_{\max}$). However we may now assume $\Delta$ might be violated, for instance with small probability $\delta>0$. In this case, the probability ratio bound might no longer hold. However, by definition, this failure happens with probability at most $\delta$. This then translate into the $(\epsilon, \delta)$-DP guarantee. 
\subsection{Additional experiments}\label{appx:exp}
\textbf{Privacy tradeoff experiment.} In this experiment, the trend of the maximum error of the FMI model during each training epoch, along with the corresponding epsilon values, is presented in Figure 6. We observe that as epsilon increases, the FMI model shows a reduction in maximum error, reflecting improved performance. This relationship allows us to select the most suitable model for specific privacy requirements.
\begin{figure}[h!]
    \centering
    \begin{minipage}{0.45\textwidth}
        \centering
        \includegraphics[width=\textwidth]{figure/FIG_epsilon_output1.pdf}  
    \end{minipage}\hfill
    \begin{minipage}{0.45\textwidth}
        \centering
        \includegraphics[width=\textwidth]{figure/FIG_epsilon_output2.pdf}  
    \end{minipage}
    \caption{Max Error vs Epsilon: The left chart shows the maximum error as epsilon increases for four datasets, each containing 1 million records. The right chart includes all datasets of various record sizes.}
    \label{fig:tradeoff}
\end{figure}

\textbf{Accuracy comparison results.} In this experiment, we compare the rate of missing data (i.e., the proportion of data that is not successfully retrieved) for three indexing methods: Crypt$\epsilon$, SPECIAL, and FMI. The results, as shown in Table 1, highlight significant differences in the performance of these methods across various datasets for both point queries and range queries. Formally, let D represent a sorted dataset, and $V=D[v_0, v_1]$ denote the true indexing range, while $\tilde{V} = D[\tilde{v_0}, \tilde{v_1}]$ represents the range produced by indexing mechanism. The missing data rate is computed as $\frac{\max(0, \min(v_1, \tilde{v_1}) - \max(v_0, \tilde{v_0}))}{|V|}$.

Both SPECIAL and FMI maintain a 0.0 missing data rate across all datasets and query types, meaning that they never fail to retrieve the correct data, ensuring complete accuracy for both point and range queries. However, Crypt$\epsilon$ performs substantially worse, with consistently high missing data rates, particularly in larger datasets, rendering it unsuitable for practical use. For instance, in the uniform 1K dataset, Crypt$\epsilon$ has a missing data rate of 0.4899 for point queries and 0.1618 for range queries, meaning that nearly half of the data for point queries and over 16\% of the data for range queries are not retrieved correctly.

As the dataset size increases, Crypt$\epsilon$’s performance even gets worse, remaining significantly inferior to SPECIAL and FMI. Expect uniform distribution dataset with less than 1M records, for example, Crypt$\epsilon$’s missing data rate for point queries increases to nearly 100\%. This level of performance is unacceptable in real-world applications, especially when SPECIAL and FMI continue to show a 0.0 missing data rate, regardless of the dataset size.

In summary, while SPECIAL and FMI maintain flawless performance with 0.0 missing data across all datasets and query types, Crypt$\epsilon$’s performance is consistently poor, particularly in smaller datasets, lognormal distributions and real-world datasets. Its high rate of missing data close to over 90\% makes it impractical for real-world use. These results underscore the superiority of SPECIAL and FMI in providing reliable, accurate indexing, while Crypt$\epsilon$ proves inadequate for tasks requiring high precision.
\begin{table}[h]
\centering

\caption{Accuracy comparison between Crypt$\epsilon$, SPECIAL, and FMI}
\begin{tabular}{|l|l|c|c|c|c|}
\hline
\textbf{Model}     & \textbf{Dataset}               & \textbf{Point (Avg)} & \textbf{Point (Max)} & \textbf{Range (Avg)} & \textbf{Range (Max)} \\ \hline
\multirow{11}{*}{Crypt$\epsilon$} & \texttt{uniform 1K}              & 0.4899                     & 1.0                        & 0.1618                     & 0.3333                     \\ \cline{2-6}
                        & \texttt{uniform 10K}              & 0.6365                     & 1.0                        & 0.0688                     & 0.2034                     \\ \cline{2-6}
                        & \texttt{uniform 100K}              & 0.9970                     & 1.0                        & 0.1528                     & 0.3353                     \\ \cline{2-6}
                        & \texttt{uniform 1M}              & 1.0                        & 1.0                        & 1.0                        & 1.0                        \\ \cline{2-6}
                        & \texttt{lognormal 1K}            & 0.9406                     & 1.0                        & 0.8542                     & 1.0                        \\ \cline{2-6}
                        & \texttt{lognormal 10K}            & 1.0                        & 1.0                        & 1.0                        & 1.0                        \\ \cline{2-6}
                        & \texttt{lognormal 100K}            & 1.0                        & 1.0                        & 0.9986                     & 1.0                        \\ \cline{2-6}
                        & \texttt{lognormal 1M}            & 1.0                        & 1.0                        & 0.9940                     & 1.0                        \\ \cline{2-6}
                        & \texttt{trans 1M}        & 1.0                        & 1.0                        & 0.9904                     & 1.0                        \\ \cline{2-6}
                        & \texttt{bureau 250k} & 1.0                        & 1.0                        & 0.9423                     & 1.0                        \\ \cline{2-6}
                        & \texttt{bureau 1M}       & 1.0                        & 1.0                        & 0.9264                     & 1.0                        \\ \hline
SPECIAL            & All datasets                   & 0.0                        & 0.0                        & 0.0                        & 0.0                        \\ \hline
FMI                & All datasets                   & 0.0                        & 0.0                        & 0.0                        & 0.0                        \\ \hline
\end{tabular}
\end{table}
}
\end{document}